\documentclass[journal,10pt]{IEEEtran}
\usepackage{amsmath}
\usepackage{amsthm}   
\usepackage{amssymb}
\usepackage{cite}
\usepackage{graphicx}
\usepackage{subfigure}
\usepackage{epstopdf} 
\usepackage[square, comma, sort&compress, numbers]{natbib}
\usepackage{color} 
\newtheorem{prop}{Proposition}
\usepackage{algorithm} 
\usepackage{algorithmic}
\hyphenation{op-tical net-works semi-conduc-tor}
\newcommand{\figref}[1]{Fig.~\ref{#1}}

\begin{document}

\title{Secure Transmission for Intelligent Reflecting Surface-Assisted mmWave and Terahertz Systems
}
\author{\IEEEauthorblockN{ 
Jingping~Qiao,~\IEEEmembership{Member,~IEEE,}
and Mohamed-Slim Alouini,~\IEEEmembership{Fellow,~IEEE} 
}
\thanks{J. Qiao is with Shandong Normal University, Shandong Province, China, and also with the Computer Electrical, and Mathematical Science and Engineering Division, King Abdullah University of Science and Technology, Thuwal 23955-6900, Saudi Arabia (e-mail: jingpingqiao@sdnu.edu.cn).

 M.-S. Alouini is with the Computer Electrical, and Mathematical Science and Engineering Division, King Abdullah University of Science and Technology (KAUST), Thuwal 23955-6900, Saudi Arabia (e-mail: slim.alouini @kaust.edu.sa).}
}

\maketitle
\begin{abstract}
This letter focuses on the secure transmission for an intelligent reflecting surface (IRS)-assisted millimeter-wave (mmWave) and terahertz (THz) systems, 
in which a base station (BS) communicates with its destination via an IRS, in the presence of a passive eavesdropper.
To maximize the system secrecy rate, the transmit beamforming at the BS and the reflecting matrix at the IRS are jointly optimized with transmit power and discrete phase-shift constraints. 
It is first proved that the beamforming design is independent of the phase shift design under the rank-one channel assumption. 
The formulated non-convex problem is then converted into two subproblems, which are solved alternatively.
Specifically, the closed-form solution of transmit beamforming at the BS is derived, and the semidefinite programming (SDP)-based method and element-wise block coordinate descent (BCD)-based method are proposed to design the reflecting matrix.
The complexity of our proposed methods is analyzed theoretically.
Simulation results reveal that the proposed IRS-assisted secure strategy can significantly boost the secrecy rate performance, regardless of eavesdropper's locations (near or blocking the confidential beam).

\end{abstract}
\begin{IEEEkeywords}
Physical layer security, Millimeter wave, Terahertz, Intelligent reflecting surface, Discrete phase shift.
\end{IEEEkeywords}

\section{Introduction}
\IEEEPARstart{T}{erahertz} (THz) and millimeter-wave (mmWave) are the key technologies for beyond 5G and next generation (6G) wireless communications, which can enable ultra-high data-rate communications and improve date security \cite{mmwaveSurvey,NatureShuping}. 
However, the severe propagation loss and blockage-prone nature in such high-frequency bands also pose greater challenges to information security, such as short secure propagation distance and unreliable secure communications \cite{Ma2018Nature,Ela2020Open}. 
These phenomena are more serious for THz bands due to higher frequency. Thus a new approach for information secure transmission in high frequency bands is urgently needed. 


Recently, the intelligent reflecting surface (IRS) \cite{Renzo2019J,Nad2020open,Nad2020TWC}
has emerged as an invaluable way for widening signal coverage and overcome high pathloss of mmWave and THz systems \cite{Wang201908arXiv,Chen2019ICCC,Pan2020TWC} and has drawn increasing attention in secure communications. 
In IRS-assisted secure systems, the IRS intelligently adjusts its phase shifts to steer the signal power to desired user, and reduce information leakage \cite{RuiZhang2019IEEEWCL}.
To maximize the secrecy rate, the active transmit beamforming and passive reflecting beamforming were jointly designed in \cite{Shen2019CL,Dong2020WCL,Robert2019GLOBECOM}. 
However, the above works mainly focus on microwave systems, while IRS-assisted secure mmWave/THz systems still remain unexplored. 
Moreover, extremely narrow beams of mmWave/THz waves can cause serious information leakage of beam misalignment and the costly implementation of continuous phase  control. 
Besides, the blockage-prone nature of high frequency bands may lead to a serious secrecy performance loss, since eavesdroppers can not only intercept but also block legal communications.

Motivated by the aforementioned problems, this letter investigates the IRS-assisted secure transmission in mmWave/THz bands. 
Under the discrete phase-shift assumption, a joint optimization problem of transmit beamforming and reflecting matrix is formulated to maximize the secrecy rate. It is proved that under the rank-one channel model, the transmit beamforming design is independent of reflecting matrix design. 
Thus the formulated non-convex problem is solved by converting it into two subproblems. 
The closed-form transmit beamforming is derived and the the hybrid beamforming structure is designed adopting orthogonal matching pursuit (OMP) method. 
Meanwhile, the semidefinite programming (SDP)-based method and the element-wise block coordinate descent (BCD) method are proposed to obtain the optimal discrete phase shifts.
Simulation results demonstrate that the proposed methods can achieve near-optimal secrecy rate performance 
with discrete phase shifts.


\section{System Model and Problem Formulation}
\begin{figure}[h]
\centering
\includegraphics[width=0.45\textwidth]
{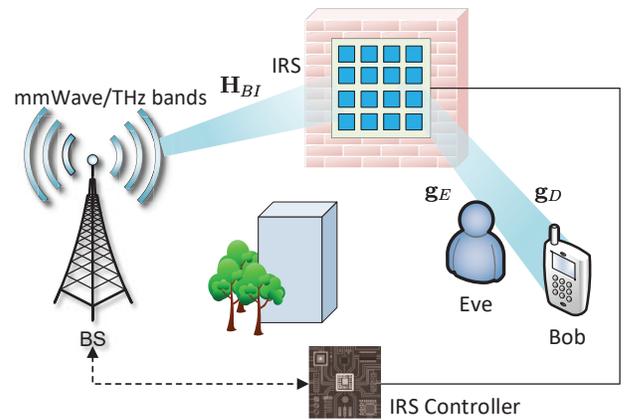}
\caption{System model for IRS-assisted secure mmWave/THz system, where BS communicates with desired user Bob, in the presence of an eavesdropper.}
\label{fig_1}
\end{figure}
The IRS-assisted secure mmWave/THz system is considered in this letter. As shown in \figref{fig_1}, one base station (BS) with $M$ antennas communicates with a single-antenna user Bob in the presence of a single-antenna eavesdropper Eve, which is an active user located near Bob. 
To protect confidential signals from eavesdropping, an IRS with a smart controller is adapted to help secure transmission.   
Note that due to deep path loss or obstacle blockage, there is no direct link between source and destination or eavesdropper.  

\subsection{Channel Model}
It is assumed that the IRS with $N$ reflecting elements is installed on some high-rise buildings around the desired receiver Bob.  
Thus, the LoS path is dominant for the BS-IRS channel and the rank-one channel model is adopted 
\begin{equation}
\mathbf{H}_{BI}^{H}=\sqrt{MN}\alpha_B G_{r}G_{t}\mathbf{a}\mathbf{b}^{H},
\end{equation}
where $\alpha_B$ is the complex channel gain \cite{Bar2017TVT}, and $G_{r}$ and $G_{t}$ are receive and transmit antenna gain.
$\mathbf{a}\in \mathbb{C}^{N\times 1}$ and $\mathbf{b}\in \mathbb{C}^{M\times 1}$ denote the array steering vector at IRS and BS, respectively. 

The IRS-Bob/Eve channel is assumed as\footnote{All channels are assumed to be perfectly known at BS, and the results derived can be considered as the performance upper bound. } 
\begin{equation}
\mathbf{g}_k=\sqrt{\frac{N}{L}}\sum^L_{i=1}\alpha_{k,i}G_{r}^kG_{I}^{k}\mathbf{a}_{k,I},
\end{equation}
where $k=\{D,E\}$, $L$ is the number of paths from IRS to $k$ node, $\mathbf{a}_{k,I}$ denotes the transmit array steering vector at IRS.
\subsection{Signal Model}
In the IRS-assisted secure mmWave/THz system, the BS transmits signal $s$ with power $P_s$ to an IRS, and the IRS adjusts phase shifts of each reflecting element to help reflect signal to Bob. 
We assume $\mathbf{\Theta}\!=\!\text{diag}\{ \!e^{j\theta_{1}}\!, e^{j\theta_{2}}\!,\dots,e^{j\theta_{N}}\!\}$ as the reflecting matrix, where $\theta_{i}$ is the phase shift of each element. 
Different from traditional IRS-assisted strategies with continuous phase shifts, the discrete phase shifts are considered in terms of IRS hardware implementation.
Specifically, $\theta_{i}$ can only be chosen from a finite set of discrete values $\mathcal{F}\!=\!\{0,\Delta\theta,..., (L_P\!-\!1)\Delta\theta\}$, $L_P$ is the number of discrete values, and $\Delta\theta=2\pi/L_P$.

The received signal at user Bob can be expressed as 
\begin{equation}\label{eq_3}
y_D=\mathbf{g}_{D}^{H}\mathbf{\Theta}\mathbf{H}_{BI}^{H}\mathbf{w}s+n_D,
\end{equation}
where $\mathbb{E}\{|s|^2\}\!\!=\!\!1$, 
$n_D\!\sim\!\mathcal{CN}(0, \sigma_D^2)$ is the noise at destination.
$\mathbf{w}=\mathbf{F}_{RF}\mathbf{f}_{BB}$ is the transmit beamforming at the BS, where $\mathbf{F}_{RF}\!\in\!\mathbb{C}^{M\times R}$ is analog beamformer and $\mathbf{f}_{BB}\!\in\! \mathbb{C}^{R\times 1}$ is digital beamformer, $R$ is the number of radio frequency (RF) chains.
Similarly, the received signal at eavesdropper is written as
\begin{equation}\label{eq_4}
y_E=\mathbf{g}_E^{H}\mathbf{\Theta}\mathbf{H}_{BI}^{H}\mathbf{w}s+n_E,
\end{equation}
where $n_E \sim~\mathcal{CN}(0, \sigma_E^2)$ denotes the noise at eavesdropper.

Then the system secrecy rate can be written as
\begin{equation}\label{eq_Rs}
R_s=\left[\log_2\left(\frac{1+\frac{1}{\sigma_D^2}|\mathbf{g}_{D}^{H}\mathbf{\Theta}\mathbf{H}_{BI}^{H}\mathbf{w}|^2}
{1+\frac{1}{\sigma_E^2}|\mathbf{g}_{E}^{H}\mathbf{\Theta}\mathbf{H}_{BI}\mathbf{w}|^2}\right)\right]^+,
\end{equation}
where $[x]^+=\max\{0,x\}$.

\vspace{-0.2cm}
\subsection{Problem Formulation}
To maximize the system secrecy rate, the joint optimization problem of transmit beamforming and reflecting matrix is formulated as
\begin{equation*}
\setlength{\abovedisplayskip}{2pt}
\setlength{\belowdisplayskip}{3pt}
(\mathrm{P1}):~
\max_{\mathbf{w},\mathbf{\Theta}} 
\frac{\sigma_D^2+|\mathbf{g}_{D}^{H}\mathbf{\Theta}\mathbf{H}_{BI}^{H}\mathbf{w}|^2}
{\sigma_E^{2}+|\mathbf{g}_{E}^{H}\mathbf{\Theta}\mathbf{H}_{BI}^{H}\mathbf{w}|^2}
\end{equation*}
\begin{equation}
\begin{array}{ll} 
\text{s.t.} &\|\mathbf{w}\|^2\leq P_s,\\
& \theta_i\in \mathcal {F}, \forall i.\\
\end{array}
\end{equation}
The formulated problem is non-convex and is quite challenging to be solved directly, because of coupled variables $(\mathbf{w},\mathbf{\Theta})$ and the non-convex constraint of $\theta_i$. 
To cope with this difficulty, the original problem $\mathrm{P1}$ is converted into two subproblems, which can be solved alternatively. 

\section{Secrecy Rate Maximization} 
To find out solutions, we propose to convert problem $\mathrm{P1}$ into two subproblems. This idea is based on the fact that two subproblems are independent of each other. 
Specifically, the closed-form solution of $\mathbf{w}$ will be first derived. Then the SDP-based method and element-wise BCD method will be proposed to obtain the solution of reflecting matrix.

\vspace{-0.2cm}
\subsection{Transmit Beamforming Design}

Since the rank-one channel is assumed from BS to IRS, 
the subproblem with respect to beamformer $\mathbf{w}$ is expressed as
\begin{align}
(\mathrm{P2.1}):~&
\max_{\mathbf{w}} 
\frac{\sigma_D^2+|\alpha_{B}G_{r}G_{t}\mathbf{g}_{D}^{H}\mathbf{\Theta}\mathbf{a}|^2|\mathbf{b}^{H}\mathbf{w}|^2}
{\sigma_E^{2}+|\alpha_{B}G_{r}G_{t}\mathbf{g}_{E}^{H}\mathbf{\Theta}\mathbf{a}|^2|\mathbf{b}^{H}\mathbf{w}|^2}\nonumber\\
&~\text{s.t.} ~~\|\mathbf{w}\|^2\leq P_s.
\end{align}
\begin{prop}\label{prop_1}
\setlength{\abovedisplayskip}{3pt}
\setlength{\belowdisplayskip}{3pt}
Under the positive secrecy rate constraint, the suboptimal problem of $\mathbf{w}$ is equivalent to 
\begin{align}
(\mathrm{P2.1}^{'}):~
\max_{\mathbf{w}} 
|\mathbf{b}^{H}\mathbf{w}|^2,\nonumber
~~~\text{s.t.} ~\|\mathbf{w}\|^2\leq P_s.
\end{align}
It is independent of reflecting matrix design.
For any value of $\mathbf{\Theta}$, the transmit beamformer solution is $\mathbf{w}^{opt}=\sqrt{P_s}\frac{\mathbf{b}}{\|\mathbf{b}\|}$. 
\end{prop}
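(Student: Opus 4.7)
The plan is to exploit the fact that, after invoking the rank-one factorization of $\mathbf{H}_{BI}^{H}$, the transmit beamformer $\mathbf{w}$ enters the objective only through the scalar quantity $x \triangleq |\mathbf{b}^{H}\mathbf{w}|^{2}$. Concretely, I would set
\begin{equation*}
A \triangleq |\alpha_{B}G_{r}G_{t}\mathbf{g}_{D}^{H}\mathbf{\Theta}\mathbf{a}|^{2},\qquad
B \triangleq |\alpha_{B}G_{r}G_{t}\mathbf{g}_{E}^{H}\mathbf{\Theta}\mathbf{a}|^{2},
\end{equation*}
so that the objective of $(\mathrm{P2.1})$ becomes the scalar ratio
\begin{equation*}
f(x) \;=\; \frac{\sigma_D^{2} + A x}{\sigma_E^{2} + B x},\qquad x\ge 0.
\end{equation*}
This reduction already decouples the $\mathbf{w}$-problem from $\mathbf{\Theta}$ in structure: the dependence on $\mathbf{\Theta}$ is entirely absorbed into the constants $A,B$, while $\mathbf{w}$ only enters through $x$.

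Next I would study monotonicity of $f$ in $x$. A direct differentiation gives
\begin{equation*}
f'(x) \;=\; \frac{A\sigma_E^{2} - B\sigma_D^{2}}{(\sigma_E^{2} + Bx)^{2}},
\end{equation*}
so the sign of $f'$ is governed by $A\sigma_E^{2} - B\sigma_D^{2}$. Here comes the key observation: the positive secrecy rate condition $R_s>0$ in \eqref{eq_Rs} is exactly $\tfrac{1}{\sigma_D^{2}}|\mathbf{g}_D^{H}\mathbf{\Theta}\mathbf{H}_{BI}^{H}\mathbf{w}|^{2} > \tfrac{1}{\sigma_E^{2}}|\mathbf{g}_E^{H}\mathbf{\Theta}\mathbf{H}_{BI}^{H}\mathbf{w}|^{2}$, which under the rank-one factorization collapses to $A\sigma_E^{2} > B\sigma_D^{2}$ (assuming a nontrivial $x>0$, which any sensible $\mathbf{w}$ achieves). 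Therefore $f$ is strictly increasing in $x$ on the feasible set, and maximizing $f$ is equivalent to maximizing $x = |\mathbf{b}^{H}\mathbf{w}|^{2}$. This justifies the reformulation $(\mathrm{P2.1}')$ and, since $A,B$ cancel out of the optimizer, establishes that the optimal $\mathbf{w}$ does not depend on $\mathbf{\Theta}$.

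Finally, to get the closed form I would apply Cauchy--Schwarz together with the power constraint:
\begin{equation*}
|\mathbf{b}^{H}\mathbf{w}|^{2} \;\le\; \|\mathbf{b}\|^{2}\,\|\mathbf{w}\|^{2} \;\le\; P_{s}\|\mathbf{b}\|^{2},
\end{equation*}
with equality iff $\mathbf{w}$ is aligned with $\mathbf{b}$ and uses full power; this gives $\mathbf{w}^{\mathrm{opt}} = \sqrt{P_s}\,\mathbf{b}/\|\mathbf{b}\|$.

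The only real subtlety, and the step I would flag as the main obstacle, is the careful handling of the positive-secrecy-rate hypothesis: without it, the coefficient $A\sigma_E^{2}-B\sigma_D^{2}$ could be nonpositive, in which case $f$ would be nonincreasing in $x$ and the ``maximize $|\mathbf{b}^{H}\mathbf{w}|^{2}$'' reduction would be wrong (one would instead take $\mathbf{w}=\mathbf{0}$, giving $R_s=0$). I would therefore make this hypothesis explicit and note that it is consistent with the secrecy rate definition $[\,\cdot\,]^{+}$, so that any phase-shift configuration $\mathbf{\Theta}$ yielding a negative argument is discarded a priori and the common beamformer $\mathbf{w}^{\mathrm{opt}}$ is optimal for all the remaining configurations.
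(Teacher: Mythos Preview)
Your proposal is correct and follows exactly the same approach as the paper's proof: both observe that, after the rank-one factorization, $\mathbf{w}$ enters only through $|\mathbf{b}^{H}\mathbf{w}|^{2}$, then use the positive-secrecy-rate hypothesis $|\mathbf{g}_D^{H}\mathbf{\Theta}\mathbf{a}|^{2}/\sigma_D^{2}>|\mathbf{g}_E^{H}\mathbf{\Theta}\mathbf{a}|^{2}/\sigma_E^{2}$ to conclude monotonicity and hence independence from $\mathbf{\Theta}$. Your version is more explicit (the derivative computation and the Cauchy--Schwarz step are written out, whereas the paper simply asserts ``it is easily proved''), but there is no substantive difference.
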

\begin{proof}
See Appendix A.
\end{proof}
After obtaining $\mathbf{w}^{opt}$ from Proposition 1, $(\mathbf{F}_{RF}^{opt}, \mathbf{f}_{BB}^{opt})$ with a full-connected architecture can be easily derived by typical hybrid precoding methods, such as OMP algorithm \cite{Tropp2007TIT}.

\subsection{Reflecting Matrix Design}
To facilitate the following mathematical operations, we first define $\hat{\pmb{\theta}}=[e^{j\theta_1},e^{j\theta_2},...,
e^{j\theta_N}]^{T}$, then we have $\mathbf{\Theta}=\text{diag}\{\hat{\pmb{\theta}}\}$.    
The subproblem of reflecting matrix can be rewritten as 
\vspace{-0.2cm}
\begin{align}
(\mathrm{P2.2}):~&
\max_{\hat{\pmb{\theta}}} 
\frac{1+\frac{1}{\sigma_D^2}|\hat{\pmb{\theta}}^{T}\text{diag}\{\mathbf{g}_{D}^{H}\}\mathbf{H}_{BI}^{H}\mathbf{w}|^2}
{1+\frac{1}{\sigma_E^2}|\hat{\pmb{\theta}}^{T}\text{diag}\{\mathbf{g}_{E}^{H}\}\mathbf{H}_{BI}^{H}\mathbf{w}|^2},\nonumber\\
&\text{s.t.} ~ \hat{\pmb{\theta}}=[e^{j\theta_1},e^{j\theta_2},...,e^{j\theta_N}]^{T},~
\theta_i\in \mathcal {F}, \forall i.\nonumber
\end{align}\label{P_2.2}
It is obvious that the variable $\theta_i$ only takes a finite number of values from $\mathcal{F}$. 
Thus, problem $\mathrm{P2.2}$ is feasible to be solved with an exhaustive search method. 
However, due to a large feasible set of $\hat{\pmb{\theta}}$ ($N^{L_P}$ possibilities), the complexity of such method is considerably high. 
To cope with this, the SDP-based algorithm and element-wise BCD algorithm are proposed.
  
\subsubsection{SDP-based Algorithm} 
Define $\mathbf{\Phi}=\hat{\pmb{\theta}}^{*}(\hat{\pmb{\theta}}^{*})^{H}$ and relax discrete variables $\theta_i$ into continuous $\theta_i\in[0,2\pi)$, i.e., $|e^{j\theta_i}|=1, \forall i$, then we can rewrite problem $\mathrm{P2.2}$ as $\max_{\mathbf{\Phi}\succeq 0} \frac{\text{tr}(\hat{\mathbf{R}}_{RD}\mathbf{\Phi})}{\text{tr}(\hat{\mathbf{R}}_{RE}\mathbf{\Phi})}$ with rank-one constraint, $\text{rank}(\mathbf{\Phi})=1$.

Since $\text{rank}(\mathbf{\Phi})=1$ is a non-convex constraint, the semidefinite relaxation (SDR) is adopted to relax this constraint. 
Using Charnes-Cooper transformation approach, we define $\mathbf{X}=\mu\mathbf{\Phi}$  and  $\mu=1/\text{tr}(\hat{\mathbf{R}}_{RE}\mathbf{\Phi})$. Then problem $\mathrm{P2.2}$ is rewritten as
\begin{align}\label{eq_11}
(\mathrm{P2.2^{'}}):~& \max_{\mu\geq 0,\mathbf{X}\succeq 0} 
\text{tr}(\hat{\mathbf{R}}_{RD}\mathbf{X})  \nonumber\\
&\begin{array}{ll} 
\text{s.t.}& 
\text{tr}(\hat{\mathbf{R}}_{RE}\mathbf{X})=1,\\
&\text{tr}(\mathbf{E}_{n}\mathbf{X})=\mu, \forall n.\\
\end{array}
\end{align}
where $\hat{\mathbf{R}}_{RD}=\frac{1}{N}\mathbf{I}_{N}+\frac{1}{\sigma_{D}^2}\text{diag}\{\mathbf{g}_{D}^{*}\}
\mathbf{H}_{BI}^{H}\mathbf{w}\mathbf{w}^{H}\mathbf{H}_{BI}
\text{diag}\{\mathbf{g}_{D}\}$,
 $\hat{\mathbf{R}}_{RE}\!=\!\!\frac{1}{N}\mathbf{I}_{N}\!+\!\frac{1}{\sigma_{E}^2}\text{diag}\{\mathbf{g}_{E}^{*}\}
\mathbf{H}_{BI}^{H}\mathbf{w}\mathbf{w}^{H}\mathbf{H}_{BI}
\text{diag}\{\mathbf{g}_{E}\}$, and $\mathbf{E}_n$ means the element on position $(n,n)$ is $1$ and 0 otherwise.

Problem $\mathrm{P2.2^{'}}$ is a standard SDP problem, and can be  solved by adopting Interior-point method or CVX tools. \cite{QingWu2019TCOM}
Then the rank-one solution $\hat{\pmb{\theta}}^{opt}\!\!=\![
e^{j\tilde{\theta}_1^{opt}},e^{j\tilde{\theta}_2^{opt}},
 ...,e^{j\tilde{\theta}_N^{opt}}]^{T}$ can be achieved by the Gaussian randomization method. 
To obtain the discrete solution of problem $\mathrm{P2.2}$, we quantify the continuous solution $\hat{\pmb{\theta}}^{opt}$ as the  nearest discrete value in set $\mathcal{F}$, and the following principle is adopted
\begin{equation}\label{eq_12}
\theta_{i}^{opt}=\arg\min_{\theta_i\in \mathcal{F}} |e^{j\tilde{\theta}_i^{opt}}-e^{j\theta_i}|, ~~\forall i.
\end{equation}
Note that since the closed-form solution of $\theta_i$ is not obtained, the entire process of SDP-based method needs to be done for each transmission block, which leads to high complexity.

\subsubsection{Element-Wise BCD Algorithm} 
To obtain the closed-form solution of reflecting matrix, the element-wise BCD method \cite{Robert2019GLOBECOM} is employed in this section.
Taking each $\theta_i$ as one block in the BCD, we can iteratively derive the continuous solutions of phase shifts using Proposition \ref{prop_2}. 
 
\begin{prop}\label{prop_2}
There exists one and only one optimal $\theta_i^{opt}$ to maximize the secrecy rate, and
\begin{equation}\label{eq_18}
\small
\setlength{\abovedisplayskip}{3pt}
\setlength{\belowdisplayskip}{3pt}
\theta_i^{opt}\!\!=\!\!
\left\{\begin{array}{ll} 
\!\!\tilde{\theta}_i^{opt},\!\!&\!\!\! c_{D,i}d_{E,i}\cos(p_{E,i})\!<\!c_{E,i}d_{D,i}\!\cos(p_{D,i})\\
\!\!\tilde{\theta}_i^{opt}\!+\!\pi,\!\!&\!\!\! \text{otherwise} 
\!\!\end{array}\right.
\end{equation}
where $\tilde{\theta}_i^{opt}$ is shown in (\ref{eq_13}) 
on the top of next page, and 
\newcounter{mytempeqncnt}
\begin{figure*}[!t]
\setcounter{mytempeqncnt}{\value{equation}}
\setcounter{equation}{12}
\begin{equation}\label{eq_13}
\tilde{\theta}_i^{opt}=-\arctan\!\left(\frac{c_{D,i}d_{E,i}\sin(p_{E,i})\!-\!c_{E,i}d_{D,i}\sin(p_{D,i})}{c_{D,i}d_{E,i}\cos(p_{E,i})\!-\!c_{E,i}d_{D,i}\cos(p_{D,i})}\right)
-\arcsin\left(\!\frac{-d_{D,i}d_{E,i}\sin(p_{E,i}-p_{D,i})}{\sqrt{c_{D,i}^{2}d_{E,i}^{2}+\!c_{E,i}^{2}d_{D,i}^{2}
\!-\!2c_{D,i}c_{E,i}d_{D,i}d_{E,i}\cos(p_{E,i}\!-\!p_{D,i})}}\!\right)
\end{equation}
\setcounter{equation}{\value{mytempeqncnt}}
\hrulefill
\vspace*{-15pt} %
\end{figure*}
\setcounter{equation}{13}
\begin{align}
&c_{k,i}=1+\frac{1}{\sigma_k^2}\left|g_{k,i}^{*}\mathbf{h}_{BI,i}^{H}\mathbf{w}\right|^2+\frac{1}{\sigma_k^2}\left|\sum_{m\neq i}e^{j\theta_{m}}g_{k,m}^{*}\mathbf{h}_{BI,m}^{H}\mathbf{w}\right|^2,\nonumber\\
&d_{k,i}=\frac{2}{\sigma_k^2}\left|\left(g_{k,i}^{*}\mathbf{h}_{BI,i}^{H}\mathbf{w}\right)
\cdot
\left(\sum_{m\neq i}e^{-j\theta_{m}}g_{k,m}\mathbf{w}^{H}\mathbf{h}_{BI,m}\right)\right|,\nonumber\\
&p_{k,i}=\angle\left(g_{k,i}^{*}\mathbf{h}_{BI,i}^{H}\mathbf{w}\sum_{m\neq i}
e^{-j\theta_{m}}g_{k,m}\mathbf{w}^{H}\mathbf{h}_{BI,m}\right).\nonumber
\end{align}
in which $k\in\{D,E\}$. 
\end{prop}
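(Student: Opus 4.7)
The plan is to treat Proposition~\ref{prop_2} as a one-variable calculus problem: with every $\theta_m$ ($m\neq i$) held fixed, reduce the ratio in $(\mathrm{P2.2})$ to a smooth function of $\theta_i\in[0,2\pi)$, locate its critical points in closed form, and then identify the maximizer. First I would split the inner product $\hat{\pmb{\theta}}^{T}\text{diag}\{\mathbf{g}_k^{H}\}\mathbf{H}_{BI}^{H}\mathbf{w}=\sum_m e^{j\theta_m}g_{k,m}^{*}\mathbf{h}_{BI,m}^{H}\mathbf{w}$ into its $i$-th summand $e^{j\theta_i}\alpha_{k,i}$, with $\alpha_{k,i}:=g_{k,i}^{*}\mathbf{h}_{BI,i}^{H}\mathbf{w}$, and a frozen remainder $\beta_{k,i}:=\sum_{m\neq i}e^{j\theta_m}g_{k,m}^{*}\mathbf{h}_{BI,m}^{H}\mathbf{w}$. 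Expanding $|e^{j\theta_i}\alpha_{k,i}+\beta_{k,i}|^2=|\alpha_{k,i}|^2+|\beta_{k,i}|^2+2|\alpha_{k,i}\beta_{k,i}^{*}|\cos(\theta_i+p_{k,i})$ with $p_{k,i}=\angle(\alpha_{k,i}\beta_{k,i}^{*})$, then dividing by $\sigma_k^2$ and adding $1$, recovers exactly the constants $c_{k,i},d_{k,i},p_{k,i}$ stated in the proposition. The objective thus becomes
\begin{equation*}
f(\theta_i)=\frac{c_{D,i}+d_{D,i}\cos(\theta_i+p_{D,i})}{c_{E,i}+d_{E,i}\cos(\theta_i+p_{E,i})},
\end{equation*}
and the identity $c_{k,i}-d_{k,i}=1+(|\alpha_{k,i}|-|\beta_{k,i}|)^2/\sigma_k^2\ge 1$ keeps the denominator bounded away from zero, so $f$ is real-analytic on the circle.

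Next I would impose $f'(\theta_i)=0$. Clearing the positive denominator and applying $\sin X\cos Y-\cos X\sin Y=\sin(X-Y)$ to collapse the $d_{D,i}d_{E,i}$ cross terms, the stationarity condition simplifies to
\begin{equation*}
c_{D,i}d_{E,i}\sin(\theta_i+p_{E,i})-c_{E,i}d_{D,i}\sin(\theta_i+p_{D,i})+d_{D,i}d_{E,i}\sin(p_{E,i}-p_{D,i})=0.
\end{equation*}
Expanding each $\sin(\theta_i+p_{\cdot,i})$ turns this into $A\sin\theta_i+B\cos\theta_i=-C$, with $A,B,C$ exactly the quantities that appear inside the $\arctan$ and $\arcsin$ of~(13). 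Writing the left-hand side as $R\sin(\theta_i+\phi)$ with $R=\sqrt{A^2+B^2}$ and $\tan\phi=B/A$ yields $\sin(\theta_i+\phi)=-C/R$, whose two solutions per period are $\theta_i\in\{-\phi+\arcsin(-C/R),\;-\phi+\pi-\arcsin(-C/R)\}$. Existence of solutions, i.e.\ $|C|\le R$, follows from a short calculation that uses $c_{k,i}>d_{k,i}$ to show $R^2-C^2\ge(c_{D,i}d_{E,i}-c_{E,i}d_{D,i})^2\ge 0$.

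Since $f$ is smooth and $2\pi$-periodic, its two stationary points must consist of one maximum and one minimum, which establishes the ``one and only one'' clause. The remaining task is to pick the maximizer and to repair the principal-branch ambiguity of $\arctan$, which returns values only in $(-\pi/2,\pi/2)$. A second-derivative (equivalently, sign-of-$f'$) test at each candidate shows that the maximizer is characterized by a fixed sign of $\cos(\theta_i+\phi)$, and since $\cos\phi=A/R$, that sign is controlled by the sign of $A=c_{D,i}d_{E,i}\cos p_{E,i}-c_{E,i}d_{D,i}\cos p_{D,i}$. This is precisely the dichotomy~(11): when $A<0$, formula~(13) already lands on the maximizer; when $A\ge 0$, a $\pi$-shift is required. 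The main obstacle I anticipate is exactly this quadrant/branch bookkeeping: the trigonometric reduction to a sinusoid is mechanical, but simultaneously lifting $\arctan$ back to the correct quadrant and choosing the maximizing (rather than minimizing) stationary point requires careful sign tracking, while everything else is routine calculus and a handful of trigonometric identities.
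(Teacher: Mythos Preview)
Your proposal is correct and follows essentially the same route as the paper's Appendix~B: both rewrite the objective as $f(\theta_i)=\dfrac{c_{D,i}+d_{D,i}\cos(\theta_i+p_{D,i})}{c_{E,i}+d_{E,i}\cos(\theta_i+p_{E,i})}$, differentiate, reduce the numerator of $f'$ to a single sinusoid $\sqrt{A^2+B^2}\,\sin(\theta_i+\phi)+C$ with the same $A,B,C$, and then disambiguate the $\arctan$ branch via the sign of $A=c_{D,i}d_{E,i}\cos p_{E,i}-c_{E,i}d_{D,i}\cos p_{D,i}$. If anything, you are more explicit than the paper on two points it leaves implicit: the feasibility check $|C|\le R$ and the argument that the two stationary points on the circle must be one maximum and one minimum.
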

\begin{proof}
See Appendix B.
\end{proof}
With the above proposition, the optimal discrete solution of $\theta_i$ can be chosen using the same quantization principle in (\ref{eq_12}), and the entire element-wise BCD-based secrecy rate maximization algorithm can be summarized as Algorithm \ref{alg_1}.
Since the objective function in $\mathrm{P2.2}$ is non-decreasing after each iteration and upper-bounded by a finite value of generalized eigenvalue problem \cite{Qiao2018TVT},  the convergence is guaranteed.

\begin{algorithm}
  \caption{E-BCD based Secrecy Rate Maximization
  }
  \label{alg_1}
  \begin{algorithmic}[1]
  
  \STATE 
  Initialize  
  $\mathbf{\Theta}^{0}=\text{diag}\{e^{j\theta_{1}^{0}},e^{j\theta_{2}^{0}},...,e^{j\theta_{N}^{0}}\}$,  $\varepsilon$, and set $n=0$.
  \STATE Find $\mathbf{w}^{opt}$ using Proposition \ref{prop_1}, then calculate optimal $\mathbf{F}_{BF}^{opt}$ and  $\mathbf{f}_{BB}^{opt}$ using OMP method.
  \REPEAT  
  \STATE $n=n+1$
  
  \STATE
\textbf{for} $i=1,2,...,N$ 
\textbf{do}
 \STATE calculate $\theta_{i}^{n}$ using Proposition 2, and quantify it as discrete solution.  
\STATE
\textbf{end for}
  \UNTIL 
  {$\|\mathbf{\Theta}^{n}-\mathbf{\Theta}^{n-1}\|\leq \varepsilon$}.
  
  \RETURN $(\mathbf{w}^{opt})^{'}=\mathbf{F}_{BF}^{opt}\mathbf{f}_{BB}^{opt}$ and  $\mathbf{\Theta}^{opt}=\mathbf{\Theta}^{n}$
  \end{algorithmic}
\end{algorithm}

\subsection{Complexity Analysis}
The total complexity of the SDP-based method is about $\mathcal{O}(N_{gaus}N^8)$, 
which is mainly determined by the complexity of solving SDP problem and the number of rank-one solutions to construct the feasible set of Gaussian randomization method `$N_{gaus}$'. 
While the complexity of the proposed element-wise BCD method is about $\mathcal{O}(N(NM\!+\!L_P)N_{iter})$, which relies on the complexity of $\theta_i$ calculation and the iteration number $N_{iter}$ for $\mathbf{\Theta}^{opt}$. 
Intuitively, these methods have lower complexity than exhaustive research method $\mathcal{O}(N^{L_P+1}(N^2\!+\!NM))$, and the complexity of the element-wise BCD method is the lowest. 

\section{Simulation Results and Analysis}
In this section, simulation results are presented to validate the secrecy rate performance in mmWave and THz bands. 
All results are obtained by averaging over $1000$ independent trials.
We consider a scenario where the BS employs a uniform linear array (ULA) and the IRS is a uniform rectangular array (URA). 
Unless otherwise specified, the transmitter frequency is $f=0.3$~THz, and the transmit power is $P_s=25$~dBm, $M=16$. Due to a full-connected hybrid beamforming structure at BS, the RF chain number is less than $M$, $M_{RF}=10$.
The number of discrete values in $\mathcal{F}$ is $L_P=2^3$, and $\Delta \theta=\frac{\pi}{4}$. 
Additionally, the antenna gain is set to $12$~dBi. 
The complex channel gain $\alpha_B$ and $\alpha_{k,i}$ can be obtained on the basis of \cite{Bar2017TVT}.     

\begin{figure}[!t]
\centering
\subfigure[$R_s$ versus $L_P$.]{
\begin{minipage}{0.2\textwidth}
\centering
\includegraphics[width=1\textwidth]{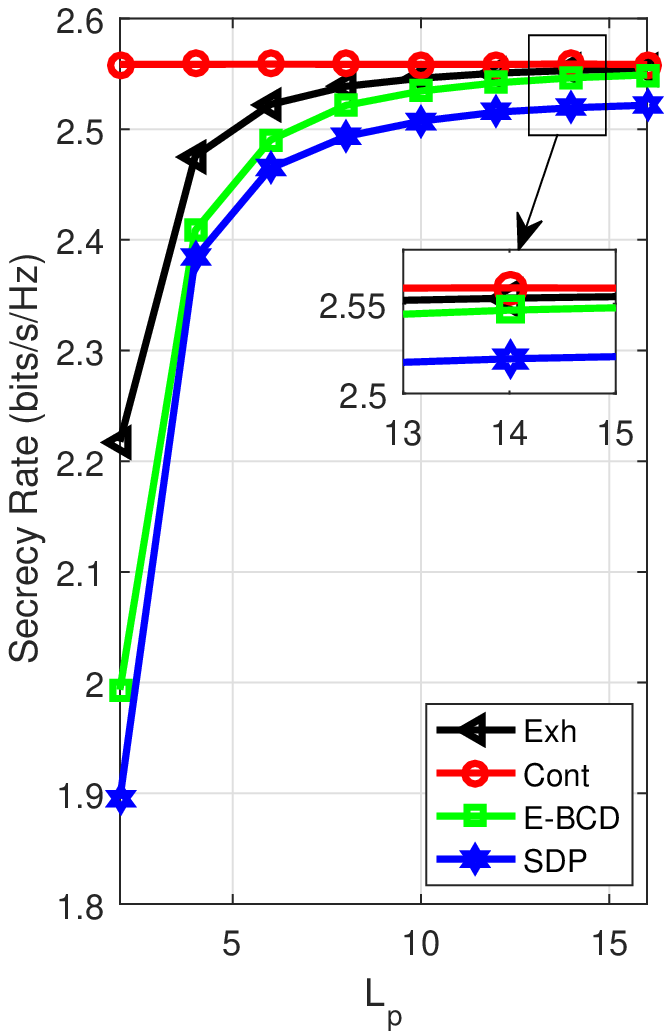}\\
\label{fig_2}
\end{minipage}}
\subfigure[$R_s$ versus $P_s$.]{
\begin{minipage}{0.2\textwidth}
\centering
\includegraphics[width=1\textwidth]{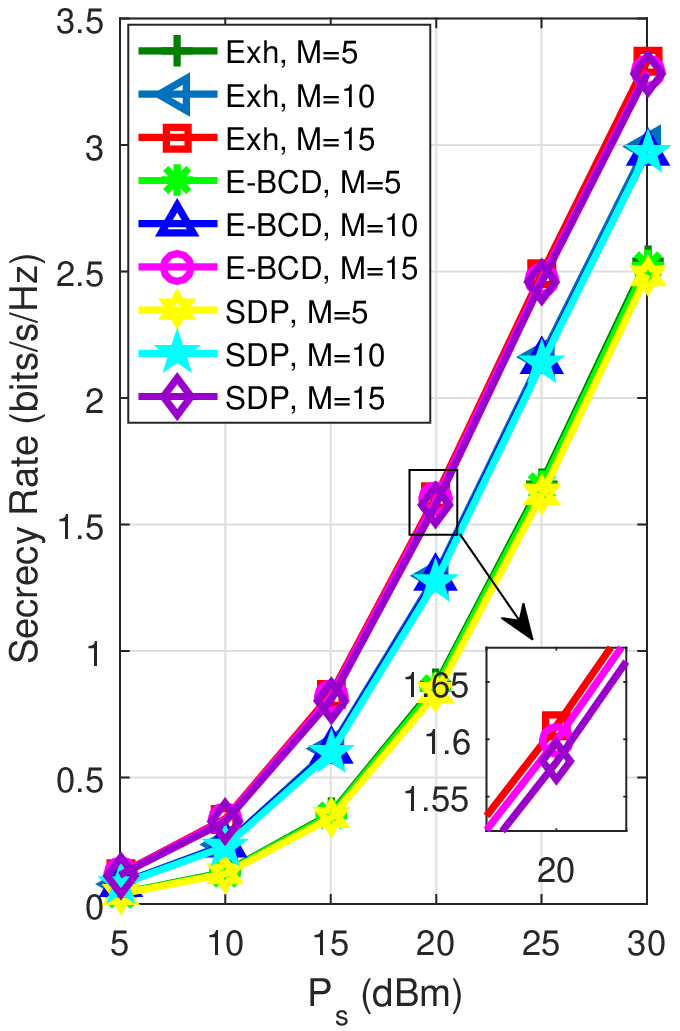}\\
\label{fig_3}
\end{minipage}}
\caption{Secrecy rate performance for IRS-assisted mmWave and THz system, in which $N=4$, the BS-to-IRS and IRS-to-Bob distances are $d_{sr}=d_{rd}=5$~m, and Eve is located near Bob with IRS-Eve distance $d_{re}=5$~m.}
\end{figure}
The secrecy rate with different number of discrete values of set $\mathcal{F}$ is shown in \figref{fig_2}. With the increase of $L_P$, the secrecy rate of proposed methods approaches that of exhaustive search method  (`Exh')\footnote{The solution of exhaustive search method is the optimal solution of discrete case, since all feasible solutions of phase shifts are searched.}.
This is rather intuitive, since an increased number of discrete values leads to a reduction in quantization error in (\ref{eq_12}), thereby enhancing secrecy rate. 
Besides, it is obvious that discrete schemes are upper-bounded by the continuous solution (`Cont'). 

The secrecy rate versus transmit power $P_s$ is shown in \figref{fig_3}. 
Based on the optimal beamformer obtained from Proposition \ref{prop_1}, it is easily proved that the secrecy rate is an increasing function of $P_s$. 
Thus as $P_s$ increase, 
the secrecy rate increases monotonously. Similarly as \figref{fig_2}, \figref{fig_3} reveals that our proposed methods can achieve near optimal performance as exhaustive search method. 
Besides, results also demonstrate that the more antennas are equipped at BS, the higher secrecy rate is achieved. 

The effect of the number of reflecting elements on the secrecy rate of IRS-based and BS-based interception is also investigated.\footnote{Here, optimal $(\mathbf{w}, \mathbf{\Theta})$ for BS-based interception can be derived iteratively.}.  
As shown in \figref{fig_4}, as N increases from $10$ to $100$, the secrecy rate increases monotonously. This is because the more reflecting elements result in sharper reflecting beams, thereby enhancing information security.
In particular, when Eve is located within the reflecting/transmit beam of IRS/BS, 
we assume that it can intercept and block $\rho$ portion of confidential signals.
Intuitively, the more information is blocked, the worse secrecy rate can be achieved. 
Thus, compared with interception without blocking, this case is more serious for mmWave and THz communications. 
However, since the IRS-assisted secure transmission scheme is designed, the secrecy rate is significantly improved compared with secure oblivious approach (only maximizing information rate at the legal user). 

\begin{figure}[!t]
\centering
\subfigure[Eve intercepts IRS.]{
\begin{minipage}{0.2\textwidth}
\centering
\includegraphics[width=1\textwidth]{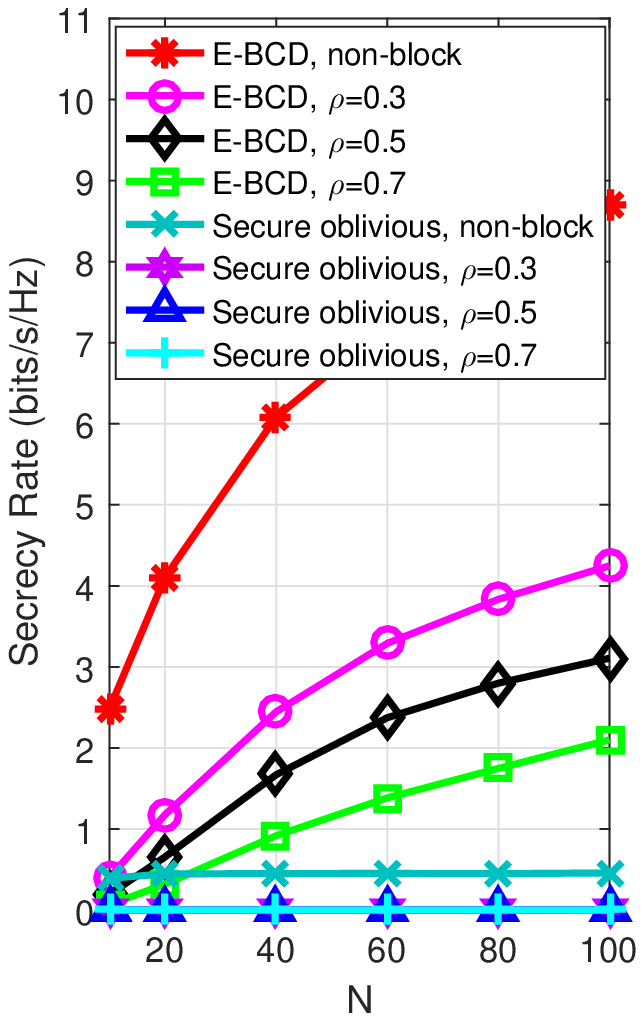}\\
\label{fig_4a}
\end{minipage}}
\subfigure[Eve intercepts BS.]{
\begin{minipage}{0.2\textwidth}
\centering
\includegraphics[width=1\textwidth]{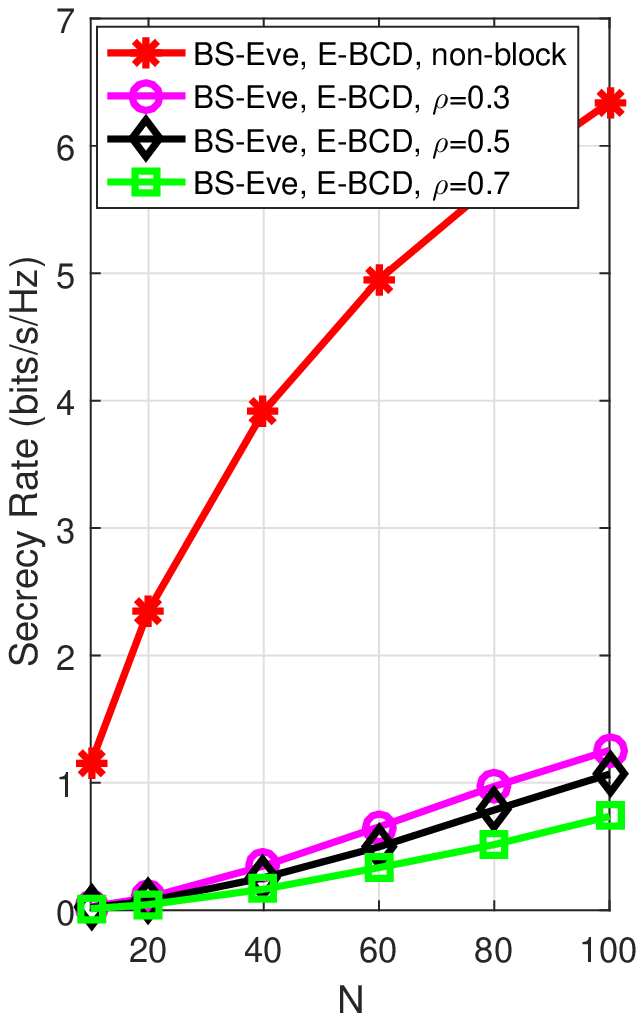}\\
\label{fig_4b}
\end{minipage}}
\caption{Secrecy rate versus the number of reflecting elements from $10$ to $100$. 
(a) Eve intercepts IRS, $d_{re}\!=\!5$~m for non-blocking, $d_{re}\!=\!2$~m for blocking, (b) Eve intercepts BS, $d_{se}\!=\!5$~m for non-blocking, $d_{se}\!=\!2$~m for blocking. 
}
\label{fig_4}
\end{figure}

\section{Conclusion}
This letter investigated the secrecy performance of IRS-assisted mmWave/THz systems.
Considering the hardware limitation at the IRS, the transmit beamforming at the BS and discrete phase shifts at the IRS have been jointly designed to maximize the system secrecy rate. 
To deal with the formulated non-convex problem, the original problem was divided into two subproblems under the rank-one channel assumption. 
Then the closed-form beamforming solution was derived,
and the reflecting matrix was obtained by the proposed SDP-based method and element-wise BCD method. 
Simulations demonstrated that our proposed methods can achieve the near optimal secrecy performance, and can combat eavesdropping  occurring at the BS and the IRS. 
 
\appendices
\section{Proof of Proposition 1}
In $\mathrm{P2.1}$, the beamforming vector $\mathbf{w}$ is always coupled with $\mathbf{b}^{H}$, thus the secrecy rate can be seen as a function of $|\mathbf{b}^{H}\!\mathbf{w}|^2$.  
Under the positive secrecy rate constraint, i.e., $\scriptsize{|\mathbf{g}_{D}^{H}\!\mathbf{\Theta}\mathbf{a}|^2\!/\sigma_D^2\!\!>\!\!|\mathbf{g}_{E}^{H}\!\mathbf{\Theta}\mathbf{a}|^2\!/\sigma_E^2}$, 
 it is easily proved that $R_s$ is an increasing function of $|\mathbf{b}^{H}\!\mathbf{w}|^2$. 
That is, $\mathrm{P2.1}$ is equivalent to $\max|\mathbf{b}^{H}\mathbf{w}|^2$ and is intuitively independent of $\mathbf{\Theta}$.

\section{Proof of Proposition 2}
Choosing each element at the IRS $\theta_i$ as one block of BCD, the objective function of $\mathrm{P2.2}$ can be reformulated as 
\begin{equation}
f(\theta_i)=\frac{c_{D,i}+d_{D,i}\cos(\theta_i+p_{D,i})}{c_{E,i}+d_{E,i}\cos(\theta_i+p_{E,i})},
\end{equation}
in which all parameters can be found in section III-B-2, and $c_{D,i},c_{E,i}\geq 1$, $c_{D,i}>d_{D,i}>0$, $c_{E,i}>d_{E,i}>0$. 
The sign of the derivative of $f(\theta_i)$ is determined by\footnote{To reduce complexity, $\sin(x)$-based expression is used instead of $\cos(x)$.}
\begin{equation}
h(x)=\sqrt{A_i^2+B_i^2}\sin(x)+C_i,
\end{equation} 
where $x\!=\!\theta_i\!+\!\phi$,  
$A_{i}\!=\!c_{D,i}d_{E,i}\cos(p_{E,i})\!-\!c_{E,i}d_{D,i}\cos(p_{D,i})$, $B_{i}\!= \!c_{D,i}d_{E,i}\sin(p_{E,i})\!-c_{E,i}d_{D,i}\sin(p_{D,i})$, 
and $ C_i=d_{D,i}d_{E,i}\sin(p_{E,i}-p_{D,i})$,   $\sin(\phi)=B_i\!/\!\sqrt{A_i^2\!+\!B_i^2}$, $\cos(\phi)=A_i/\sqrt{A_i^2+B_i^2}$.   
Then the main problem in deriving the unique optimal solution is to determine the value of $\phi$ or $\phi^{'}$.
\begin{itemize}
\item For $A_i>0$, we have $\cos(\phi)>0$ and $\phi=\arctan(\frac{B_i}{A_i})$. Thus, $\theta_i^{opt}=\pi-\arctan(\frac{B_i}{A_i})-\arcsin(\frac{-C_i}{\sqrt{A_i^2+B_i^2}})$.

\item For $A_i\!<\!0$, we have $\cos(\phi)\!<\!0$ and $\phi=\pi+\arctan(\frac{B_i}{A_i})$. Thus $\theta_i^{opt}=-\arctan(\frac{B_i}{A_i})-\arcsin(\frac{-C_i}{\sqrt{A_i^2+B_i^2}})$.
\end{itemize}

\vspace{-0.2cm}
\scriptsize
\bibliographystyle{IEEEtran}

\bibliography{IEEEabrv,THz_IRS}
\end{document}